\newcolumntype{L}[1]{>{\raggedright\let\newline\\\arraybackslash\hspace{0pt}}m{#1}}
\newcolumntype{C}[1]{>{\centering\let\newline\\\arraybackslash\hspace{0pt}}m{#1}}
\newcolumntype{R}[1]{>{\raggedleft\let\newline\\\arraybackslash\hspace{0pt}}m{#1}}
\newcommand{\algorithmfootnote}[2][\footnotesize]{%
  \let\old@algocf@finish\@algocf@finish
  \def\@algocf@finish{\old@algocf@finish
    \leavevmode\rlap{\begin{minipage}{\linewidth}
    #1#2
    \end{minipage}}%
  }%
}
\let\oldFootnote\footnote
\newcommand\nextToken\relax
\renewcommand\footnote[1]{%
    \oldFootnote{#1}\futurelet\nextToken\isFootnote}
\newcommand\isFootnote{%
    \ifx\footnote\nextToken\textsuperscript{,}\fi}
\newcommand{\bPhi}{\mbox{\boldmath{$\Phi$}}}
\newcommand{\bGamma}{\mbox{\boldmath{$\Gamma$}}}
\newtheorem{proposition}{Proposition}{}
{}
\newtheorem{corollary}{Corollary}{}
{}
\newfont{\mycrnotice}{ptmr8t at 7pt}
\newfont{\myconfname}{ptmri8t at 7pt}
\begin{document}
\bstctlcite{IEEEexample:BSTcontrol}
\title{\Large  MU-Massive MIMO with Multiple RISs: SINR Maximization and Asymptotic Analysis}
%
%
%
%
\author{Somayeh~Aghashahi,~\IEEEmembership{Student Member,~IEEE,}
        Zolfa~Zeinalpour-Yazdi,~\IEEEmembership{Member,~IEEE,}
        Aliakbar~Tadaion,~\IEEEmembership{Senior Member,~IEEE,}
      Mahdi~Boloursaz Mashhadi,~\IEEEmembership{Member,~IEEE,}
        and Ahmed~Elzanaty,~\IEEEmembership{Senior Member,~IEEE}
\thanks{This work is based upon research funded by Iran National Science Foundation (INSF) under project No.4013201.}
     \thanks{S. Aghashahi, Z. Zeinalpour-Yazdi and A. Tadaion are with  the Department of Electrical Engineering, Yazd University, Yazd, Iran (e-mail: aghashahi@stu.yazd.ac.ir and \{zeinalpour, tadaion\}@yazd.ac.ir).}
     \thanks{M. B. Mashhadi  and A. Elzanaty  are with 5GIC \& 6GIC, Institute for Communication Systems (ICS), University of Surrey, Guildford, GU2 7XH, United Kingdom (e-mail: \{m.boloursazmashhadi, a.elzanaty\}@surrey.ac.uk). }
     }




        \maketitle

\begin{abstract}
In this letter, we investigate the signal-to-interference-plus-noise-ratio (SINR) maximization problem in a multi-user massive multiple-input-multiple-output (massive MIMO) system enabled with multiple reconfigurable intelligent surfaces (RISs). We examine two zero-forcing (ZF) beamforming approaches for interference management namely BS-UE-ZF and BS-RIS-ZF that enforce the interference to zero at the users (UEs) and the RISs, respectively.
Then, for each case,  we resolve the SINR maximization problem to find the optimal phase shifts of the elements of the RISs. 
Also, we evaluate the asymptotic expressions for the optimal phase shifts and the maximum SINRs when the number of the base station (BS) antennas tends to infinity. We show that if the channels of the RIS elements are independent and the number of the BS antennas tends to infinity,  random phase shifts achieve the maximum SINR using the BS-UE-ZF beamforming approach. The simulation results illustrate that by employing the BS-RIS-ZF beamforming approach, the asymptotic expressions of the phase shifts and maximum SINRs achieve the rate obtained by the optimal phase shifts even for a small number of the BS antennas.
\end{abstract}

\begin{IEEEkeywords}
RIS-assisted communication, Multi-RIS, MIMO, Zero-Forcing.
\end{IEEEkeywords}

%
\IEEEpeerreviewmaketitle

\section{Introduction}

\IEEEPARstart{R}{econfigurable} intelligent surface (RIS) is a new physical layer technology introduced to overcome the drawbacks of wireless communications such as fading, blockage, and interference \cite{wu2019towards}.  An RIS is a planner array of passive reflecting elements which can change the phase of the reflected signals of which the phase shifts can be continuously adjusted  \cite{survey_RIS}. Thus, based on the scenario, the behavior of the wireless environment can be improved by deploying the RISs in appropriate places and optimizing the phase shifts of their elements.  For instance, exploiting an RIS in a cell with one user can  solve the blockage problem or increase the SNR of the user equipment (UE) \cite{RIS_sigprocess}. However, in multi-user scenarios with a multiple input multiple output (MIMO) base station (BS), interference management by joint design of beamforming vectors of the BS and phase shifts of the RIS elements is a challenge. 

Several studies have been conducted on solving the challenges of beamforming and phase shift design in an RIS-aided multi-user MIMO scenario with different objectives \cite{rate1,rate4_weighted, SE1,EE1,EE2,power_min,max_min_SINR,subhash2022optimal,elzanaty,subhash2022max,AmanatElzanaty:21} such as maximization of the sum rate \cite{rate1,rate4_weighted}, spectral efficiency \cite{SE1}  and energy efficiency \cite{EE1,EE2} or minimization of the transmission power  \cite{power_min},  max-min SINR problem \cite{max_min_SINR,subhash2022optimal}, and the exposure to electromagnetic fields \cite{elzanaty,subhash2022max}. 
However, the proposed solutions involve intricate algorithms and straightforward designs are not presented in the literature. Moreover, the analysis of the results for a massive number of  BS antennas is not investigated.

Regarding the above considerations, in this letter, we investigate the SINR maximization problem in the downlink transmission of a multi-user massive MIMO system assisted with multiple RISs. In this scenario, we assume that some of the UEs are directly served by the BS, while others have no direct channel to the BS and are served through the RISs. Each RIS serves several UEs in a specific geographic area, forming a cluster. We use two zero-forcing (ZF) approaches for the beamforming of the BS.
The advantage of ZF beamforming is that it completely cancels the interference and this gets us a degree of freedom for asymptotic analysis and phase shift design.
Moreover, the ZF beamforming in massive MIMO systems achieves a near-optimal performance\cite{6736761}.
~The main contributions of this letter can be summarized as follows:
	(i) We employ a ZF beamforming approach named BS-UE-ZF  that nulls the interference at the UEs, and obtain the optimal  phase shifts of the RISs that maximize the SINR of the UEs. 
Then, for the case that one UE is connected to each RIS, we derive  asymptotic expressions for the optimal phase shifts  as the number of the BS antennas tends to infinity.
	(ii)  We prove that if the RISs experience i.i.d. normal channels,  random phase shifts achieve the maximum SINR  when the number of the BS antennas tends to infinity.  
(iii) For the case that one UE is connected to each RIS, we propose another ZF beamforming approach named BS-RIS-ZF that nulls the interference at the RISs. This  approach leads to closed-form expressions for the RIS phase shifts that maximize the SINRs at  UEs. 
(iv)  We evaluate asymptotic expressions for the optimal phase shifts and the maximum SINRs for the BS-RIS-ZF.
     \begin{figure}[t!]
     \centering
 \includegraphics[width=.75\columnwidth]{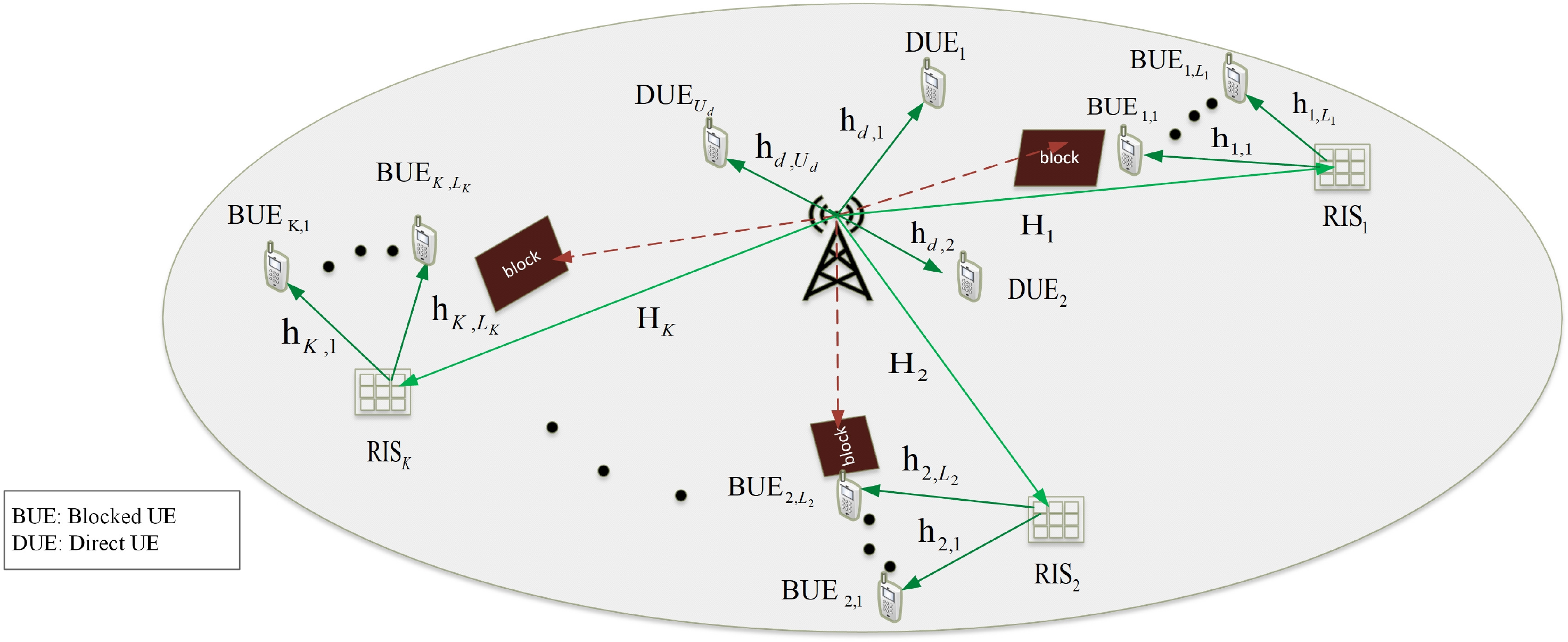}
 \caption{  The system model}
 \label{fig_sys_model}
    \end{figure} 
\section{System Model}
We consider  downlink transmission in a multi-RIS multi-user massive  MIMO  system. We   assume that  $U_d$  of the UEs (named direct UEs) are directly connected to the BS, and $U_b$ of them (named blocked UEs) have no direct connection with the BS  and  $K$ RISs are deployed to assist the communication between the BS and  these UEs. For $k\in \mathcal{K} \triangleq \{1,\dots,K\}$, $L_k$ of the blocked UEs are served through the $k^{\rm th}$ RIS which are selected by algorithm \ref{alg1}.
Moreover,  the channels between this RIS and the other UEs are blocked as considered in \cite{NOMA}.
Let,  $M$ be the number of the BS antennas and $N$ be the number of the elements at each of the RISs.  Also, $\mathbf{H}_{k} \in \mathbb{C}^{M\times N}$, $ \mathbf{h}_{k,\ell}\in \mathbb{C}^{N \times 1}$, and $ \mathbf{h}_{d,u}\in \mathbb{C}^{M \times 1}$   denote the channel matrix between the  BS and the $k^{\rm th}$ RIS, the channel vector between the $k^{\rm th}$ RIS and the $\ell^{\rm th}$ blocked UE connected to this RIS, and the channel between the BS and the $u^{\rm th}$  direct UE, respectively. We  assume that each of the channel coefficients has a complex normal distribution. The channel between the BS and each IRS is correlated such that $\mathbb{E}\{\mathbf{H}_{{k}}^{H}\mathbf{H}_{k}\}=\mathbf{R}_k$, while the channels between the BS and different RISs are independent, i.e., $\mathbb{E}\{\mathbf{H}_{{k}}^{H}\mathbf{H}_{k^{\prime}}\}=\mathbf{0}$ for $k\neq k^{\prime}$.
 The diagonal reflected matrix for the $k^{\rm th}$ RIS is  $\bPhi_{k} \triangleq \mathrm{diag}(e^{j\phi_{k,1}},\dots,e^{j\phi_{k,N}})$, where for $i \in  \mathcal{N}\triangleq \{1,\dots, N\}$, $\phi_{k,i}$ is the phase shift of the $i^{\rm th}$ element of the RIS. 
 Thus, the received signal at the $\ell^{\rm th}$ blocked UE connected to the $k ^{\rm th}$ RIS is given by
\begin{align}
& y_{_{b,k,\ell}}\hspace{-1mm} = \hspace{-1mm}\mathbf{h}_{k,\ell}^{H}\bPhi_{k}\mathbf{H}_{k}^{H}\hspace{-0.5mm}\mathbf{w}_{b,k,\ell} x_{_{b,k,\ell}}\hspace{-0.7mm} \notag
 +\hspace{-1.5mm}\sum_{\substack{ m=1\vspace*{-0.25mm} \\ m\neq k}}^{K}\hspace{-1mm}\sum_{\substack{ i=1}}^{L_m}\hspace{-0.5mm}   \mathbf{h}_{k,\ell}  ^{H}\bPhi_{k}\mathbf{H}_{k}^{H}\hspace{-0.5mm}\mathbf{w}_{b,m,i} x_{_{b,m,i}}
 \\ & \hspace{-1mm}+\hspace{-1mm}\sum_{\substack{ i=1\vspace*{-0.1mm}\\ i \neq \ell}}^{L_k}\hspace{-0.5mm}  \mathbf{h}_{k,\ell}  ^{H}\bPhi_{k}\mathbf{H}_{k}^{H}\mathbf{w}_{b,k,i} x_{_{b,k,i}}\hspace{-0.7mm}+\hspace{-0.7mm}\sum_{n=1}^{U_d}\mathbf{h}_{k,\ell}^{H}\bPhi_{k}\mathbf{H}_{k}^{H}\mathbf{w}_{d,n} x_{_{d,n}}
  \label{eq_block_sig} 
\hspace{-1mm}+\hspace{-0.7mm}z_{_{b,k,\ell}}, 
\end{align}
and the received signal at the $u^{\rm th}$ direct UE is 
\begin{align}
& y_{_{d,u}}\hspace{-1.4mm}=\hspace{-1mm}\mathbf{h}_{d,u}^{H}\hspace{-0.5mm}\mathbf{w}_{d,u} x_{_{d,u}} \hspace{-1mm}+\hspace{-1.8mm}\sum_{\substack{ n=1 \vspace*{-0.25mm}\\ n \neq u}}^{U_d}\hspace{-0.5mm}\mathbf{h}_{d,u}^{H}\hspace{-0.5mm}\mathbf{w}_{d,n} x_{_{d,n}} \hspace{-1mm} +\hspace{-2mm}\sum_{m=1}^{K}\hspace{-1mm}\sum_{\ell=1}^{L_m}\hspace{-0.5mm}\mathbf{h}_{d,u}^{H}\hspace{-0.5mm}\mathbf{w}_{b,m,\ell} x_{_{b,m,\ell}}\notag \\ &+ z_{_{d,u}}, \label{eq_direct_sig}
\end{align}
where $\mathbf{w}_{b,k,\ell}$ and $\mathbf{w}_{d,u}$ are the beamforming vectors, $x_{_{b,k,\ell}}$ and $x_{_{d,u}}$  are the transmit signals and $z_{_{b,k,\ell}}$ and $z_{_{d,u}}$ are the AWGN
  corresponding to the $\ell^{\rm th}$ blocked UE connected to the $k^{\rm th}$ RIS and the $u^{\rm th}$  direct UE, respectively. The first terms in \eqref{eq_block_sig} and \eqref{eq_direct_sig} are the desired signals for the $\ell^{\rm th}$ blocked and $u^{\rm th}$ direct UEs, respectively, while the second and third terms are the interference. Thus, the SINR of the $\ell^{\rm th}$ blocked UE connected to the $k^{\rm th}$ RIS  can be written as
\begin{align}
\label{SINR}
 & \mathrm{SINR}_{b,k,\ell}={\dfrac{|\mathbf{h}_{k,\ell}^{H}\bPhi_{k}\mathbf{H}_{k}^{H}\mathbf{w}_{b,k,\ell}|^2}{\mathcal{I}_{b,k,\ell}+\sigma^2_{k,\ell}  },} 
\\  \notag \hspace{-12mm} \textrm{where} \ & \mathcal{I}_{b,k,\ell}\hspace{-1mm}=\sum_{\substack{ m=1\vspace*{-0.25mm}\\ m\neq k}}^{K}\hspace{-1mm}\sum_{\substack{ i=1}}^{L_m}\hspace{-0.5mm}  | \mathbf{h}_{k,\ell}^{H}\bPhi_{k}\mathbf{H}_{k}^{H}\mathbf{w}_{b,m,i}|^2+\\   &  \hspace{-10mm}+\sum_{\substack{ i=1\vspace*{-0.1mm}\\ i \neq \ell}}^{L_k} | \mathbf{h}_{k,\ell}^{H}\bPhi_{k}\mathbf{H}_{k}^{H}\mathbf{w}_{b,k,i}\hspace{-0.2mm}|^2+ \hspace{-0.7mm}\sum_{n=1}^{U_d}|\mathbf{h}_{k,\ell}^{H}\bPhi_{k}\mathbf{H}_{k}^{H}\mathbf{w}_{d,n}|^2,\notag
\end{align}
\vspace{-2mm}
while the SINR of the $u^{\rm th}$direct UE is 
\begin{equation}
\label{SINR}
\mathrm{SINR}_{d,u}\hspace{-0.7mm}=\hspace{-0.7mm}\dfrac{|\mathbf{h}_{d,u}^{H}\mathbf{w}_{d,u}|^2}{\hspace{-0.5mm}\sum\limits_{\substack{n=1\vspace*{-0.25mm} \\ n \neq u}}^{U_d}\hspace{-0.5mm} |\mathbf{h}_{d,u}^{H}\mathbf{w}_{d,n}|^2\hspace{-0.5mm}+\hspace{-1mm}\sum\limits_{\substack{m=1}}^{K}\hspace{-0.5mm} \sum\limits_{\substack{\ell=1}}^{L_m} |\mathbf{h}_{d,u}^{H}\mathbf{w}_{b,m,\ell}|^2\hspace{-0.5mm}+\hspace{-0.5mm}\sigma^2_u  }.
\end{equation}

\begin{algorithm}[]
\caption{}
\label{alg1}
\algsetup{linenosize=\small}
\scriptsize
\begin{algorithmic}[1]
\STATE
 \textbf{Input}: \\ The allowed number of UEs to be scheduled, $U_{\max}$. \\
 The minimum value of required receive power, $p_{ \rm min}$.
\STATE 
\textbf{Output}: Scheduled UEs and assigned RISs.
\STATE Turn off the RISs and broadcast the signal.
\STATE Save the receive power of the UEs. \label{powBS}
\FOR{$k=1$ to $K$}
\STATE Turn on the $k^{\rm th}$ RIS and broadcast the signal.
\STATE Save the receive power of the UEs. \label{powIRS}
\STATE Turn off the $k^{\rm th}$ RIS.
\ENDFOR
\STATE Schedule up to $U_{\max}$ number of the UEs, whose maximum  receive power is more than $p_{\rm min}$ and assign the RISs \footnotemark.
\end{algorithmic}
\end{algorithm} 
\vspace{-0.4cm}
\section{SINR Maximization Problem}
In this section, first, we employ the BS-UE-ZF beamforming approach in which we cancel the interference at the UEs. Then, for the special case when one UE is related to each of the RISs, we design  the BS-RIS-ZF beamforming approach in which we cancel the interference at the RISs. For each of the beamforming methods, we design the phase shifts of the RISs to maximize the SINR  of the blocked UEs and  obtain  asymptotic  results when the number of  BS antennas tends to infinity, i.e., $M \longrightarrow \infty$.\footnotetext{The UEs  with the highest receive powers can be scheduled and each UE  can be assigned to the RIS that maximizes its received power.} 
\vspace{-0.35cm}
\subsection{ BS-UE-ZF Beamforming}
In the BS-UE-ZF beamforming, we  design the beamforming vectors in such a way that they  enforce the interference at the UEs to  zero. Actually, considering $\mathbf{g}_{_{k,\ell}}=\mathbf{H}_k \bPhi_{k}^{H} \mathbf{h}_{k,\ell}$ as the channel vector between the BS and the $\ell^{\rm th}$  blocked UE connected to $k^{\rm th}$ RIS and defining 
$\mathbf{Q}_1=\begin{bmatrix}
\mathbf{g}_{_{1,1}} \dots \mathbf{g}_{_{1,L_1}}   \dots \mathbf{g}_{_{K,1}} \dots \mathbf{g}_{_{K,L_{K}}}  \ \mathbf{h}_{d,1} \dots \mathbf{h}_{d,U_d}
\end{bmatrix}^{H},$
 the ZF beamforming matrix is given as 
 \begin{equation}
 \mathbf{W}_{\mathrm{ZF}}=\mathbf{Q}_1^{H}(\mathbf{Q}_1\mathbf{Q}_1^{H})^{-1},
 \label{eq:UEZF}
\end{equation} 
 which under the condition $M\geq (U_b+U_d)$, eliminates the inter-user interference \cite{MIMO2004}. Therefore, the SINR of the $\ell^{\rm th}$ blocked UE connected to the $k^{\rm th}$ RIS  becomes
\begin{align}
\label{eq_SINR_zf}
\mathrm{SINR}_{b,k,\ell}&=\dfrac{|\mathbf{g}_{k,\ell}^H \mathbf{Q}_1^{H}( \mathbf{Q}_1\mathbf{Q}_1^{H})^{-1} \mathbf{e}_k|^2}{\sigma^2_{k,\ell}}   
=\dfrac{|\mathbf{v}_k^H \mathbf{q}_{k,\ell}|^2}{\sigma^2_{k,\ell}},
\end{align}
where $\mathbf{q}_{k,\ell}={\rm diag}(\mathbf{h}_{k,\ell}^H) \mathbf{H}_k^H \mathbf{Q}_1^{H}( \mathbf{Q}_1\mathbf{Q}_1^{H})^{-1} \mathbf{e}_k  $,  $\mathbf{e}_{k}$ is the unit-base vector with non-zero value at the $k^{\rm th}$ component, 
$\mathbf{v}_k={\rm Vec}(\bPhi_k^{H})$ and operator ${\rm Vec}(.)$ returns the vector of the diagonal components of the diagonal matrix.

In order to  jointly maximize  the SINR of the blocked UEs, as a suboptimal solution, for $k\in \mathcal{K}$
we maximize the sum of the SINRs of the UEs connected to the $k^{\rm th}$ RIS, i.e.,
\begin{align}
&\max_{\mathbf{v}_k}  \sum_{\ell=1}^{L_k} |\mathbf{v}_k^H \mathbf{q}_{k,\ell}|^2 \equiv \max_{\mathbf{v}_k} \mathbf{v}_k^H \left(\sum_{\ell=1}^{L_k} \mathbf{q}_{k,\ell}\mathbf{q}_{k,\ell}^{H}\right)\mathbf{v}_k \label{prob_v_k} \\ &\,\,\mathrm{s.t.}\, |\mathbf{v}_k|^2 \leq 1 \quad \quad \quad \quad \mathrm{s.t.}\, |\mathbf{v}_k|^2 \leq 1, \notag
\end{align}
The optimum value of $ \mathbf{v}_k$ in \eqref{prob_v_k} is equal to the eigenvector of matrix $\sum_{\ell=1}^{L_k} \mathbf{q}_{k,\ell}\mathbf{q}_{k,\ell}^{H}$ corresponding to its maximum eigenvalue.
Thus, for $L_k=1$, the optimum value of $\phi_{k,i}$ is
\begin{align}\label{eq_phi_opt}
\hspace{-.1cm} \phi_{k,i}\hspace{-1mm}=\hspace{-1mm}-\sphericalangle(h^{*}_{k,1,i}[\mathbf{H}_{k}^{H} \mathbf{Q}_1^{H}( \mathbf{Q}_1\mathbf{Q}_1^{H})^{-1} \mathbf{e}_k]_{_{i}}), k\in \mathcal{K}, i\in \mathcal{N}.
\end{align}

In the following proposition, we obtain the asymptotic form of \eqref{eq_phi_opt} as $M \longrightarrow \infty$.  
\begin{proposition}\label{pre_asymp}
Considering the BS-UE-ZF beamforming in \eqref{eq:UEZF} and the asymptotic regime where $M\longrightarrow \infty$, when one UE is connected to the $k^{\rm th}$ RIS  the phase shifts of the RIS elements that maximize the SINR of the UEs can be 
found by solving  the following equation system:
\begin{equation}\label{eq_prep}
\phi_{k,i}=-\sphericalangle (h_{k,1,i}^{*} \sum_{\ell=1}^{N} R_{i,\ell} e^{-j \phi_{k,\ell}}  h_{k,1,\ell})
\   i\in \mathcal{N},
\end{equation}
where $R_{k,i,\ell}$ is the element $(i,\ell)$ of the correlation matrix  $\mathbf{R}_{k}=\mathbb{E}\{\mathbf{H}_{k}^{H}\mathbf{H}_{k}\}$.
\end{proposition}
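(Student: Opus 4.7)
The plan is to start from the exact optimizer \eqref{eq_phi_opt} and evaluate the large-$M$ limit of the vector $\mathbf{b}_k \triangleq \mathbf{H}_k^H \mathbf{Q}_1^H (\mathbf{Q}_1 \mathbf{Q}_1^H)^{-1} \mathbf{e}_k$; once $\sphericalangle[\mathbf{b}_k]_i$ is identified, \eqref{eq_prep} follows by direct substitution. Because $L_k = 1$ for every cluster, the $k$-th row of $\mathbf{Q}_1$ is simply $\mathbf{g}_{k,1}^H$ with $\mathbf{g}_{k,1} = \mathbf{H}_k \bPhi_k^H \mathbf{h}_{k,1}$, so $\mathbf{Q}_1^H \mathbf{e}_k = \mathbf{g}_{k,1}$ and everything reduces to understanding how $(\mathbf{Q}_1 \mathbf{Q}_1^H)^{-1}$ acts on $\mathbf{e}_k$ as $M \to \infty$.

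The main step is an asymptotic-orthogonality (channel-hardening) argument on the Gram matrix $\mathbf{Q}_1 \mathbf{Q}_1^H$. Its rows come from two independent families, the cascaded vectors $\{\mathbf{g}_{k,1}\}_{k=1}^{K}$ and the direct channels $\{\mathbf{h}_{d,u}\}_{u=1}^{U_d}$. Under the paper's assumptions (independent direct-UE channels and $\mathbb{E}\{\mathbf{H}_k^H \mathbf{H}_{k'}\} = \mathbf{0}$ for $k\neq k'$ with independent entries across RISs), every off-diagonal entry of $\mathbf{Q}_1 \mathbf{Q}_1^H$ is a sum of $M$ independent zero-mean products and is therefore of order $\sqrt{M}$ by the CLT, whereas the diagonal entries $\|\mathbf{g}_{k,1}\|^2$ and $\|\mathbf{h}_{d,u}\|^2$ grow linearly in $M$ by the LLN. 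A Neumann expansion about the diagonal part then gives $(\mathbf{Q}_1 \mathbf{Q}_1^H)^{-1}\mathbf{e}_k = \|\mathbf{g}_{k,1}\|^{-2}\mathbf{e}_k + o(1/M)$, so to leading order $\mathbf{Q}_1^H (\mathbf{Q}_1 \mathbf{Q}_1^H)^{-1}\mathbf{e}_k \to \mathbf{g}_{k,1}/\|\mathbf{g}_{k,1}\|^2$.

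Multiplying on the left by $\mathbf{H}_k^H$ and invoking $\mathbf{H}_k^H \mathbf{H}_k \to \mathbf{R}_k$ (channel hardening once more) yields
\begin{equation*}
\mathbf{b}_k \;\longrightarrow\; \frac{\mathbf{R}_k \bPhi_k^H \mathbf{h}_{k,1}}{\mathbf{h}_{k,1}^H \bPhi_k \mathbf{R}_k \bPhi_k^H \mathbf{h}_{k,1}}.
\end{equation*}
The denominator is a strictly positive real scalar and is therefore invisible to $\sphericalangle(\cdot)$, so $\sphericalangle[\mathbf{b}_k]_i = \sphericalangle \sum_{\ell=1}^{N} R_{i,\ell}\, e^{-j\phi_{k,\ell}}\, h_{k,1,\ell}$, and plugging this into \eqref{eq_phi_opt} reproduces the fixed-point system \eqref{eq_prep}. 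I expect the main obstacle to be making the Neumann step fully rigorous: after rescaling by the $M$-scale diagonal, one must show that the residual has operator norm going to zero (a.s.\ or in probability) so that the series can be truncated after one term and only the $k$-th diagonal entry of $(\mathbf{Q}_1 \mathbf{Q}_1^H)^{-1}$ survives. The cross-RIS and RIS-versus-direct-UE independence assumptions carry all the essential weight here; the remaining LLN manipulations are routine.
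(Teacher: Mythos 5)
Your proposal is correct and follows essentially the same route as the paper: both reduce the problem to showing that channel hardening makes $\mathbf{Q}_1\mathbf{Q}_1^{H}$ asymptotically diagonal, so that $\mathbf{H}_k^{H}\mathbf{Q}_1^{H}(\mathbf{Q}_1\mathbf{Q}_1^{H})^{-1}\mathbf{e}_k \to \mathbf{R}_k\bPhi_k^{H}\mathbf{h}_{k,1}/(\mathbf{h}_{k,1}^{H}\bPhi_k\mathbf{R}_k\bPhi_k^{H}\mathbf{h}_{k,1})$, and then discard the positive real denominator inside $\sphericalangle(\cdot)$. Your CLT/Neumann-expansion justification of the diagonal limit is a slightly more explicit version of the step the paper handles by directly citing the convergence $\mathbf{H}_k^{H}\mathbf{H}_n \to \mathbf{R}_k\delta_{kn}$.
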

\begin{proof}
Considering $\mathbf{Q}_1=\begin{bmatrix} 
\mathbf{g}_{_{1,1}}  \dots \mathbf{g}_{_{K,1}}  \ \mathbf{h}_{d,1} \dots \mathbf{h}_{d,U_d}
\end{bmatrix}^{H},$ for $k\in\mathcal{K}$ we have 
 \begin{align}
& \mathbf{H}_{k}^{H} \mathbf{Q}_1^{H}( \mathbf{Q}_1\mathbf{Q}_1^{H})^{-1} \mathbf{e}_k= \notag \\ &  \mathbf{H}_{k}^{H} \begin{bmatrix}
\mathbf{H}_1 \bPhi^{H}_1 \mathbf{h}_{1,1}  \dots  \mathbf{H}_{K} \bPhi^{H}_{K} \mathbf{h}_{K,1} \ \ \mathbf{h}_{d,1} \dots \mathbf{h}_{d,U_d}  
\end{bmatrix}
\\ &
\times \scriptsize{\begin{bmatrix}
\mathbf{g}_{_{1,1}}^H \mathbf{g}_{_{1,1}} & \dots & \mathbf{g}_{_{1,1}}^H \mathbf{g}_{_{K,1}} & \mathbf{g}_{_{1,1}}^H \mathbf{h}_{d,1} & \dots & \mathbf{g}_{_{1,1}}^H \mathbf{h}_{d,U_d}  \notag \\
\vdots & & \vdots & \vdots & & \vdots
\\
\mathbf{g}_{_{K,1}}^H \mathbf{g}_{_{1,1}} & \dots & \mathbf{g}_{_{K,1}}^H \mathbf{g}_{_{K,1}} & \mathbf{g}_{_{K,1}}^H \mathbf{h}_{d,1} & \dots & \mathbf{g}_{_{K,1}}^H \mathbf{h}_{d,U_d} 
\\
\\
\mathbf{h}_{d,1}^{H}\mathbf{g}_{_{1,1}}& \dots & \mathbf{h}_{d,1}^{H}\mathbf{g}_{_{K,1}} & \mathbf{h}_{d,1}^{H} \mathbf{h}_{d,1} & \dots & \mathbf{h}_{d,1}^{H} \mathbf{h}_{d,U_d}
\\
\vdots & & \vdots & \vdots & & \vdots
\\
\mathbf{h}_{d,U_d}^{H}\mathbf{g}_{_{1,1}}& \dots & \mathbf{h}_{d,U_d}^{H}\mathbf{g}_{_{K,1}} & \mathbf{h}_{d,U_d}^{H} \mathbf{h}_{d,1} & \dots & \mathbf{h}_{d,U_d}^{H} \mathbf{h}_{d,U_d}
\end{bmatrix}^{-1}} \hspace{-0.3cm}\mathbf{e}_{k}. \label{eq_pre_proof1}
 \end{align}
Also, if 
$\mathbf{H}_{k}=\begin{bmatrix}
\mathbf{h}^{\prime}_{k,1} & \dots & \mathbf{h}^{\prime}_{k,N}
\end{bmatrix}$,
as $M \longrightarrow\infty $,
we get $\mathbf{h}^{\prime H}_{k,\ell}\mathbf{h}^{\prime}_{n,m} \longrightarrow  \mathbb{E}\{\mathbf{h}^{\prime H}_{k,\ell}\mathbf{h}^{\prime}_{n,m} \}$ \cite{randommatrix}, and thus 
\begin{equation}
\label{eq_R_asymp}
\mathbf{H}_{k}^{H}\mathbf{H}_{n}\longrightarrow 
\begin{cases}
 \mathbf{R}_{k} & n=k
\\
\mathbf{0}_{N\times N} & n \neq k
\end{cases}.
\end{equation}
where $\mathbf{R}_{k}=\mathbb{E}\{\mathbf{H}_{k}^{H}\mathbf{H}_{k}\}$ and $\mathbf{0}_{N\times N}$ is an $N\times N$ zero matrix. Therefore, as $M\longrightarrow \infty$ we have
\begin{align}
 &\mathbf{H}_{k}^{H} \mathbf{Q}_1^{H}( \mathbf{Q}_1\mathbf{Q}_1^{H})^{-1} \mathbf{e}_k=
\begin{bmatrix}
\mathbf{0} \dots  \mathbf{R}_{k}\bPhi_{k}^{H}\mathbf{h}_{k,1}   \dots \mathbf{0}
\end{bmatrix} \notag  \\ 
 & \times \mathrm{diag}\Big( { \small \frac{1}{ \mathbf{h}_{1,1}^H \bPhi_1 \mathbf{R}_1 \bPhi_1^H \mathbf{h}_{1,1} },\dots, \frac{1}{ \mathbf{h}_{K,1}^H \bPhi_{K }\mathbf{R}_{K} \bPhi_{K}^H \mathbf{h}_{K,1} }}, \notag \\ & {\small\frac{1}{\mathbf{h}_{d,1}^{H} \mathbf{h}_{d,1}}, \dots, \frac{1}{\mathbf{h}_{d,U_d}^{H} \mathbf{h}_{d,U_d}}} \Big)\mathbf{e}_{k} 
  =\frac{\mathbf{R}_{k}\bPhi_{k}^{H}\mathbf{h}_{k,1} }{\mathbf{h}_{k,1}^H \bPhi_k \mathbf{R}_k \bPhi_k^H \mathbf{h}_{k} }
\label{eq_prep_proof2}
\end{align}
and thus  \eqref{eq_phi_opt} would be equal to
\begin{align}
	 \label{eq:optohi}
 \phi_{k,i}& =-\sphericalangle (\frac{1}{\mathbf{h}_{k,1}^H \bPhi_k \mathbf{R}_k \bPhi_k^H \mathbf{h}_{k,1}  }h_{k,1,i}^{*} \sum_{\ell=1}^{N} R_{i,\ell} e^{-j \phi_{k,\ell}}  h_{k,1,\ell})\nonumber \\ &=-\sphericalangle(h_{k,1,i}^{*} \sum_{\ell=1}^{N} R_{i,\ell} e^{-j \phi_{k,\ell}}  h_{k,1,\ell}), \ \ k\in \mathcal{K}, i\in \mathcal{N},
\end{align}
where \eqref{eq:optohi} is resulted from the fact that  $\mathbf{h}_{k,1}^H \bPhi_k \mathbf{R}_k \bPhi_k^H \mathbf{h}_{k,1}$ is a real and positive number.
\end{proof}
\begin{corollary}\label{corollary1}
	For the special case that  the channel vectors between the BS and each of the elements of the $k^{\rm th}$ RIS are independent, i.e., $\ \mathbf{R}_k=\mathbf{I}$, employing the BS-UE-ZF approach, 
	the  SINR of the  blocked UE connected to the $k^{\rm th}$ RIS is maximized by considering random phase shifts for the RIS elements, as $M\longrightarrow \infty$.
\end{corollary}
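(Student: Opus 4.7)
The plan is to substitute the isotropic-correlation assumption $\mathbf{R}_k = \mathbf{I}$ directly into the characterization of the optimal phase shifts provided by Proposition \ref{pre_asymp}, and show that the fixed-point equation collapses to a tautology, so that every choice of $\{\phi_{k,i}\}$ is optimal. Concretely, with $R_{i,\ell} = \delta_{i,\ell}$, the only surviving term in the inner sum in \eqref{eq_prep} is $\ell = i$, and the right-hand side becomes
\begin{equation*}
-\sphericalangle\bigl(h^{*}_{k,1,i}\, e^{-j\phi_{k,i}}\, h_{k,1,i}\bigr) = -\sphericalangle\bigl(|h_{k,1,i}|^2 e^{-j\phi_{k,i}}\bigr) = \phi_{k,i},
\end{equation*}
since $|h_{k,1,i}|^2$ is a strictly positive real number. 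Hence the optimality condition \eqref{eq_prep} is satisfied identically, independent of the actual values of $\phi_{k,i}$.

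To make sure that this is not merely a degeneracy of a first-order condition but really corresponds to the maximum SINR, I would also substitute $\mathbf{R}_k = \mathbf{I}$ back into the asymptotic expression \eqref{eq_prep_proof2} and compute $|\mathbf{v}_k^H \mathbf{q}_{k,1}|^2$ directly. Using $\bPhi_k \bPhi_k^H = \mathbf{I}$, the denominator reduces to $\mathbf{h}_{k,1}^H \mathbf{h}_{k,1} = \sum_{i=1}^{N}|h_{k,1,i}|^2$, while the numerator $\mathbf{v}_k^H \,\mathrm{diag}(\mathbf{h}_{k,1}^H)\,\bPhi_k^H \mathbf{h}_{k,1}$ also equals $\sum_{i=1}^{N}|h_{k,1,i}|^2$, because the phase factors $e^{j\phi_{k,i}}$ from $\mathbf{v}_k^H$ exactly cancel the $e^{-j\phi_{k,i}}$ from $\bPhi_k^H$. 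Thus $|\mathbf{v}_k^H \mathbf{q}_{k,1}|^2 = 1$ for every admissible $\bPhi_k$, which confirms that the asymptotic SINR is a constant (independent of the phase shifts) and is therefore trivially maximized by any choice, including a random one.

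The argument is short and its only subtle point is the second step: one must verify that \eqref{eq_prep} is not just a necessary first-order condition but actually characterizes the maximizer in the asymptotic regime, which is why I include the direct computation of $|\mathbf{v}_k^H \mathbf{q}_{k,1}|^2$ as an independent check. No further calculation is needed, and the conclusion follows immediately.
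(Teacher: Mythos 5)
Your first step is exactly the paper's proof: substitute $\mathbf{R}_k=\mathbf{I}$ into \eqref{eq_prep}, observe that the fixed-point equation degenerates to $\phi_{k,i}=\phi_{k,i}$, and conclude that any phases will do. Where you go beyond the paper is the second step, and it is a genuine improvement rather than padding. The paper's argument, read literally, only shows that every phase configuration satisfies a stationarity-type condition; satisfying a necessary condition that has become vacuous does not by itself prove that every configuration attains the \emph{maximum} SINR. Your direct computation closes that gap: with $\mathbf{R}_k=\mathbf{I}$, the asymptotic expression \eqref{eq_prep_proof2} gives $\mathbf{q}_{k,1}$ with $i$th entry $|h_{k,1,i}|^2 e^{-j\phi_{k,i}}/\|\mathbf{h}_{k,1}\|^2$, so $\mathbf{v}_k^H\mathbf{q}_{k,1}=\sum_i |h_{k,1,i}|^2/\|\mathbf{h}_{k,1}\|^2=1$ identically in $\bPhi_k$, i.e., the asymptotic SINR is $1/\sigma^2_{k,1}$ for \emph{every} choice of phases, which is the honest justification for the corollary. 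In short: same route as the paper for the headline step, plus an independent check that turns a slightly hand-wavy ``the condition is vacuous, hence anything is optimal'' into a rigorous ``the objective is constant, hence anything is optimal.''
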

\begin{proof}
	By substituting $ \mathbf{R}_k$ with $\mathbf{I}$ in \eqref{eq_prep}, we get 
	$\phi_{k,i}=\phi_{k,i} \ \forall k,i$, meaning that $\phi_{k,i}$ can be randomly chosen.
	\end{proof}
\vspace{-0.4cm}
\subsection{BS-RIS-ZF Beamforming}
In this section, we consider the case that one UE is related to each RIS. We design a novel approach, i.e., BS-RIS-ZF, in which the beamforming vectors are derived to enforce the interference at  the RISs and direct UEs to zero\footnote{Note that the BS-RIS-ZF  beamforming approach can also be implemented for the general scenario; But, in this case,  it needs to design the phase shifts of the RISs to  cancel the interference which is out of scope of this letter.}.
Thus, the beamforming vectors should satisfy the following conditions for $k\in \mathcal{K}$ and $u\in \{1,\dots,U_d\}$:
 \begin{align}
 & \mathbf{H}_k^{H} \mathbf{w}_{b,k,1}=\mathbf{1}, \ \mathbf{H}_{m}^H \mathbf{w}_{b,k,1}=\mathbf{0}\ \forall  m \neq k, \ \mathbf{h}_{d,u}^H \mathbf{w}_{b,k,1}=0, \notag \\ &{ 
 \mathbf{h}_{d,u}^H \mathbf{w}_{d,u}=1, \ \mathbf{H}_{k}^H \mathbf{w}_{d,u}=\mathbf{0},\ \mathbf{h}_{d,n}^{H}\mathbf{w}_{d,u}=0} \ \forall  n \neq u.
 \end{align}
 
 By defining 
 $\mathbf{Q}_2 \triangleq \begin{bmatrix}
\mathbf{H}_1
\dots
\mathbf{H}_{K} \  \ \mathbf{h}_{d,1} \dots \mathbf{h}_{d,U_d}
 \end{bmatrix}^{H}$,
 the BS-RIS-ZF beamforming matrix can be written as
 \begin{equation}
 \mathbf{W}_{\rm ZF}^{\prime}=\mathbf{Q}_2^{H}(\mathbf{Q}_2\mathbf{Q}_2^{H})^{-1}\bGamma,
 \end{equation}
where 
$\bGamma=$

 $\scriptsize{\begin{bmatrix}
1 & \dots & 1  & 0 & & & \dots & & 0& \\
0 & \dots & 0 & 1& \dots & 1 &0 &\dots & 0 & \\
\vdots & \ddots  &\ddots &\ddots &\ddots &\ddots &\ddots &\ddots & \vdots  & \mathbf{0}_{K \times U_d}   \\
0 & & &  \hdots   & &0 &   1 & \hdots&1 &\\
 &  & & & {\mathbf{0} _{U_d\times NK }}& & && & \mathbf{I}_{U_d}
\end{bmatrix}}^T$.
Thus, the SINR of the blocked UE connected to  $k^{\rm th}$ RIS  is 
\begin{equation}
\label{eq_SINR3}
\hspace{-1mm}\mathrm{SINR}_{b,k,1}\hspace{-1.05mm}=\hspace{-1.05mm}\dfrac{| \sum\limits_{i=1}^{N} e^{j \phi_{i,k}} h_{k,1,i}^{*}  [\mathbf{H}_{k}^{H} \mathbf{Q}_2^{H} (\mathbf{Q}_2\mathbf{Q}_2^{H})^{-1} \bGamma \mathbf{e}_{k}]_{_{i}}|^2 }{\sigma^{2}_{k}}.\hspace{-1.5mm}
\end{equation}

Similar to the previous subsection, the optimal phase shifts of the  RIS elements  make the phases of the summation terms in   the numerator  of the SINR ratio equal to zero; thus,
\begin{equation}\label{eq_phi_opt2}
\hspace{-1mm} \phi_{k,i}\hspace{-1mm}=\hspace{-1mm}-\sphericalangle( h_{k,1,i}^{*} [\mathbf{H}_{k}^{H} \mathbf{Q}_2^{H} (\mathbf{Q}_2\mathbf{Q}_2^{H})^{-1}\bGamma \mathbf{e}_{k}]_{_{i}}) \hspace{1mm}  k\in\mathcal{K}, i\in\mathcal{N}\hspace{-2.5mm}
\end{equation}

The major benefit of the BS-RIS-ZF beamforming is the availability of the closed-form equations for the RIS phase shifts. However, we must note that in the BS-RIS-ZF  approach we need to satisfy $M>NK+U_d$ to completely remove the interference, while increasing the number of the BS antennas increases the complexity of the channel estimation \cite{channel_estimation}. Therefore, there exists a trade-off between the complexity of the phase shift design and channel estimation.

Now, in the following, we evaluate the asymptotic expressions for the optimal phase shifts and   maximum $\mathrm{SINR}_{b,k,1}$ as $M \longrightarrow \infty$.
 \begin{proposition}
 Considering the proposed BS-RIS-ZF beamforming, as $M \longrightarrow \infty$, the RIS  phase shifts that maximize the SINR of the blocked UE connected to the $k^{\rm th}$ RIS are
 \begin{equation}
 \label{eq_phase_aymp}
  \phi_{k,i}=-\sphericalangle(h_{k,i}^{*} f_{k,i}), \ i\in\mathcal{N}
 \end{equation}
  and the maximum value of  $\mathrm{SINR}_{b,k,1}$  is 
 \begin{equation}
  \label{eq_SINR_asymp}
 \mathrm{SINR}_{b,k,1}^{*}=\frac{1}{\sigma^{2}_k}(\sum_{i=1}^{N} |f_{k,i}| |h_{k,i}|)^2,
 \end{equation}
 where $f_{k,i}$ is the $i^{\rm th}$ element of the vector
  $$\mathbf{f}_{k}\hspace{-0.25em}=\hspace{-0.25em}\begin{bmatrix}
 \mathbf{0}_{N \times (k-1)N} \ \mathbf{R}_{k} \ \mathbf{0}_{N \times ((K-k)N+U_d)}
 \end{bmatrix}\hspace{-0.25em}  $$ $$
\times {\mathrm{blkdiag}}(
 \mathbf{R}_{1}^{-1}, 
\dots,  
  \mathbf{R}_{K}^{-1}, 
 \mathbf{I}_{U_d}
)\bGamma\, \mathbf{e}_{k},
$$ and $\mathrm{blkdiag}(.)$ returns a block diagonal matrix constructed by the arguments.
 \end{proposition}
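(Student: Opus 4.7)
My strategy is to rerun the asymptotic calculation of Proposition~\ref{pre_asymp} with $\mathbf{Q}_2$ in place of $\mathbf{Q}_1$, substitute the limit into the closed form \eqref{eq_phi_opt2}, and then plug the resulting phases back into \eqref{eq_SINR3}. First I would write $\mathbf{Q}_2\mathbf{Q}_2^H$ as the $(NK+U_d)\times(NK+U_d)$ Gram matrix whose blocks are $\mathbf{H}_k^H\mathbf{H}_m$, $\mathbf{H}_k^H\mathbf{h}_{d,u}$, and $\mathbf{h}_{d,u}^H\mathbf{h}_{d,u'}$. Applying \eqref{eq_R_asymp} together with the mutual independence of the cascaded and direct channels, all off-diagonal blocks vanish as $M\to\infty$, giving
\[
\mathbf{Q}_2\mathbf{Q}_2^H \longrightarrow \mathrm{blkdiag}(\mathbf{R}_1,\dots,\mathbf{R}_K,\mathbf{D}),
\]
where $\mathbf{D}$ is the diagonal matrix of squared direct-UE channel norms; its inverse is block-diagonal with the same sparsity pattern.

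Next I would observe that $\bGamma\mathbf{e}_k$ is a column whose only nonzero entries are $N$ ones indexing the $k$-th RIS, so its direct-UE entries all vanish. Consequently the $\mathbf{D}^{-1}$ block of $(\mathbf{Q}_2\mathbf{Q}_2^H)^{-1}$ is annihilated when forming $(\mathbf{Q}_2\mathbf{Q}_2^H)^{-1}\bGamma\mathbf{e}_k$, which justifies replacing that block by the placeholder $\mathbf{I}_{U_d}$ in the statement. A further application of \eqref{eq_R_asymp} shows that $\mathbf{H}_k^H\mathbf{Q}_2^H$ converges to a block row whose only nonzero block is $\mathbf{R}_k$ in the $k$-th slot, so the three factors compose to $\mathbf{f}_k$ exactly as defined in the statement. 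Substituting the $i$-th coordinate of this limit into \eqref{eq_phi_opt2} immediately yields \eqref{eq_phase_aymp}.

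For the SINR formula \eqref{eq_SINR_asymp}, I would plug the optimal phases back into the numerator of \eqref{eq_SINR3}. By construction each summand $e^{j\phi_{k,i}} h_{k,1,i}^* f_{k,i}$ becomes a nonnegative real equal to $|h_{k,1,i}|\,|f_{k,i}|$, so all $N$ terms add coherently and the numerator collapses to $\bigl(\sum_{i=1}^N |h_{k,1,i}|\,|f_{k,i}|\bigr)^2$; dividing by $\sigma_k^2$ completes the derivation.

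The main obstacle should be the bookkeeping of the block structure of $\bGamma$ and $\mathbf{Q}_2\mathbf{Q}_2^H$, in particular verifying that the direct-UE block of the inverse does not affect the final expression and that the placeholder $\mathbf{I}_{U_d}$ in the statement is harmless; once this sparsity observation is in hand, everything reduces to the convergence relation \eqref{eq_R_asymp} already exploited in Proposition~\ref{pre_asymp}.
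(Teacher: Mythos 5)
Your proposal is correct and follows essentially the same route as the paper: apply the convergence relation \eqref{eq_R_asymp} to reduce $(\mathbf{Q}_2\mathbf{Q}_2^{H})^{-1}$ and $\mathbf{H}_k^{H}\mathbf{Q}_2^{H}$ to their block-diagonal/block-row limits, substitute into \eqref{eq_phi_opt2} to get the phases, and then align all summands of \eqref{eq_SINR3} to obtain the SINR. Your extra observation that the direct-UE block of the inverse is annihilated by $\bGamma\mathbf{e}_k$ (so the $\mathbf{I}_{U_d}$ placeholder is harmless) is a small but welcome clarification that the paper leaves implicit.
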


 \begin{proof}
As $M\longrightarrow \infty$, employing \eqref{eq_R_asymp}, we have
 \begin{equation}\label{eq_HH^H}
 (\mathbf{Q}_2\mathbf{Q}_2^{H})^{-1}=  {\mathrm{blkdiag}}(
 \mathbf{R}_{1}^{-1}, 
\dots,  
  \mathbf{R}_{K}^{-1},
 \mathbf{I}_{U_d}
),
 \end{equation}
 and
\begin{equation}\label{eq_HkH}
 \mathbf{H}_{k}^{H}\mathbf{Q}_2^{H}=\begin{bmatrix}
 \mathbf{0}_{N \times (k-1)N} & \mathbf{R}_{k} & \mathbf{0}_{N \times ((K-k)N+U_d)}
 \end{bmatrix}.
\end{equation}
 Thus, substituting \eqref{eq_HH^H} and \eqref{eq_HkH} in \eqref{eq_phi_opt2}, the asymptotic expressions for the optimal phase shifts  are as in \eqref{eq_phase_aymp}. Then,
substituting the phase shifts obtained by \eqref{eq_phase_aymp}, \eqref{eq_HH^H}, and \eqref{eq_HkH} in \eqref{eq_SINR3}, the asymptotic value of maximum $\mathrm{SINR}_k$ would be obtained as in \eqref{eq_SINR_asymp}.
 \end{proof}

\textbf{Analysis of the rank of matrix $\mathbf{Q}_2$:} 
Regarding that the performance of the BS-RIS-ZF approach  can be affected by the rank of matrix $\mathbf{Q}_2$, we analyze the rank of this matrix in this section. Assuming that $\mathbf{H}_k= \mathbf{F}_k \mathbf{D}_{k}$, where $ \mathbf{F}_k$ is  an $M \times N$ matrix with i.i.d. normal random variable elements and $\mathbf{D}_k=\mathbf{R}_k^{1/2}$, the matrix $\mathbf{Q}_2$ can be rewritten as
 $\begin{bmatrix}
\mathbf{F}_1  \mathbf{D}_{1} 
 \dots 
\mathbf{F}_K  \mathbf{D}_{K} \  \ \mathbf{H}_{d}
 \end{bmatrix}^{H},$ where $\mathbf{H}_{d}=[\mathbf{h}_{d,1} \dots \mathbf{h}_{d,U_d} ]$.  Considering the fact that the elements of the random matrices $\mathbf{F}_k$, $\ k \in \mathcal{K}$, 
 and $\mathbf{H}_{d}$ are  independent of each other, we conclude that the rank of matrix $\mathbf{Q}_2$ is equal to $\sum_{k=1}^{K} \mathrm{rank}( \mathbf{D}_{k}^{H} \mathbf{F}_k^{H})+\mathrm{rank}(\mathbf{H}_{d})$. 
Also, from \cite{strang}, we have
  $$\mathrm{rank}( \mathbf{D}_{k}^{H} \mathbf{F}_k^{H})\leqslant\min \{ \mathrm{rank}( \mathbf{D}_{k}^{H}),  \mathrm{rank}( \mathbf{F}_k^{H})\},$$ and  $\mathrm{rank}( \mathbf{F}_k^{H}) \leqslant \min \{M,N\}$.
 Hence, assuming that $N\leqslant M$, we get  $\mathrm{rank}( \mathbf{D}_{k}^{H} \mathbf{F}_k^{H})\leqslant  \mathrm{rank}( \mathbf{D}_{k}^{H})$ and thus,
 $\mathrm{rank}( \mathbf{Q}_2)\leqslant \sum_{k=1}^{K} \mathrm{rank}(\mathbf{D}_{k}^{H})+\mathrm{rank}(\mathbf{H}_{d}).$ 
 Moreover, the elements of $\mathbf{H}_{d}$   are independently distributed and thus, $\mathrm{rank}(\mathbf{H}_{d})=\min \{M, U_d\}$. Therefore, assuming that $U_d<M$ we have $\mathrm{rank}( \mathbf{Q}_2)\leqslant \sum_{k=1}^{K} \mathrm{rank}(\mathbf{D}_{k}^{H})+U_d$. Consequently, we can conclude that the correlation of the BS to RIS channels can restrict the rank of matrix $\mathbf{Q}_2$.
 \vspace{-0.4cm}
 \subsection{  Computational Complexity.} 
 The number of multiplications required to compute the beamforming vectors and the RIS phase shifts in the BS-UE-ZF approach  is equal to $U_b((U_b+U_d)^3+2M(U_b+U_d)^2+MN(U_b+U_d)+MN^2+MN+1)$, while   for the BS-RIS-ZF method,  it is equal to $U_b((NU_b+U_d)^3+(2M+U_b+K+d)(NU_b+U_d)^2+MN(NU_b+U_d)+1)$. We observe that the complexity of both of the approaches with respect to $M$ is of the order of $\mathcal{O}(M)$. Also,  their complexities with respect to $N$ are  of the order of $\mathcal{O}(N)$ and $\mathcal{O}(N^3)$, respectively. Therefore, the complexity of the BS-RIS-ZF approach is more sensitive to $N$.
 \section{Simulation Results}
 In this section, we conduct simulations to verify the performance  of the proposed beamforming and phase shift design approaches.  In this regard, we use the channel estimation approach proposed in \cite{channel_estimation}  and compare the results with the case of perfect channel estimation.
 The RIS elements correlation model is adopted from \cite{RRIS}.  The minimum distance between the elements of each of the RISs  is  equal to $d=\lambda$
 and $d=\lambda/4$ for the cases of the  i.i.d. and correlated channels, respectively, where $\lambda$ is the wavelength.  Also, the area of each RIS element is equal to $A=d^2$ and  $\mu  \lambda^2=-75$dB where $\mu$ is the average intensity attenuation \cite{RRIS}.
 Moreover, the carrier frequency is  $f=1800$~MHz and   the power spectral density of the AWGN at the   UEs is equal to $-174 ~\rm{dBm}/\rm{Hz}$. 
     \begin{figure}[t] \centering
    \begin{subfigure}{0.32\textwidth}\includegraphics[width=1\columnwidth]{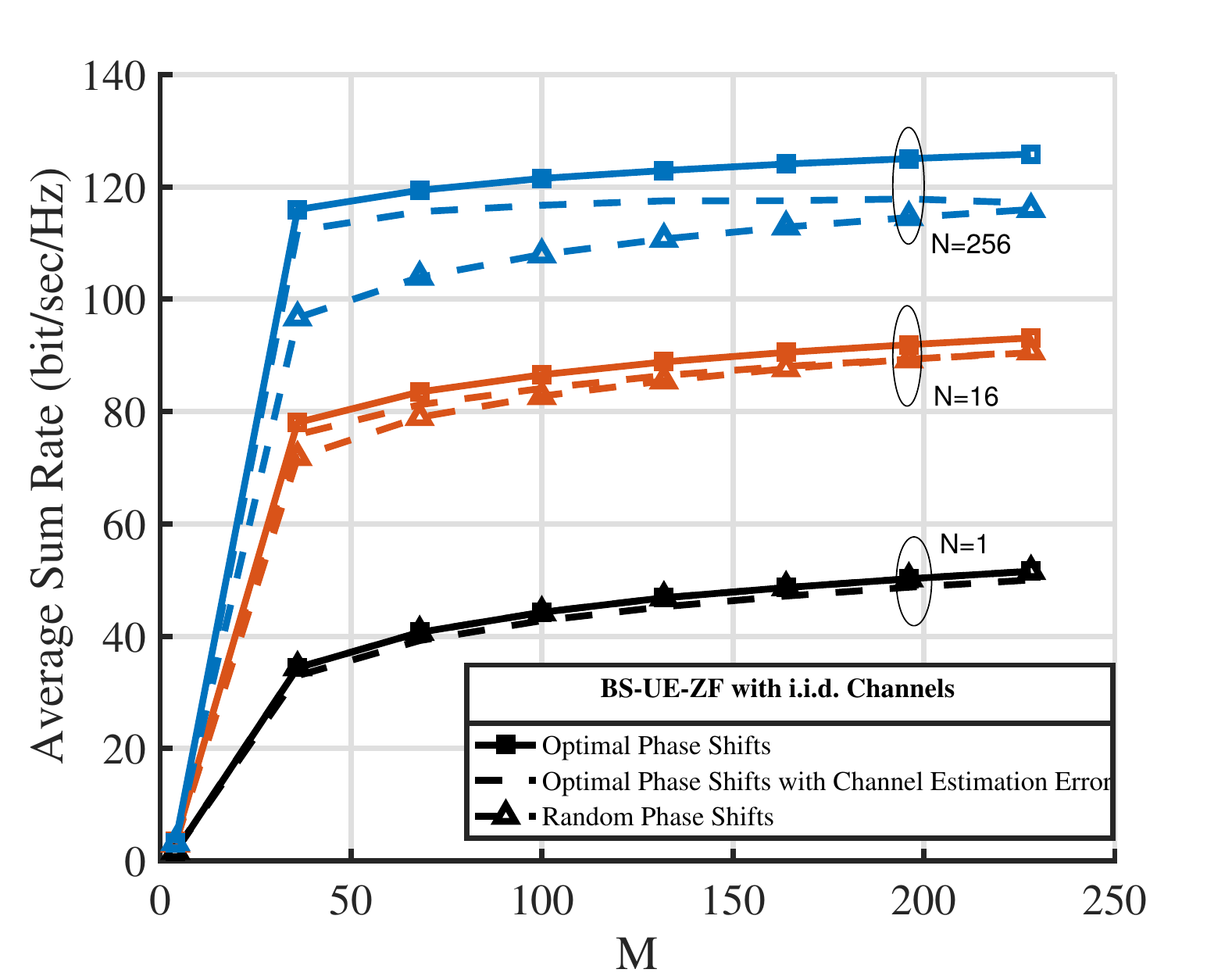}\caption{}\label{fig_BStoUser_iid}
    \end{subfigure}
 \begin{subfigure}{0.32\textwidth}\includegraphics[width=1\columnwidth]{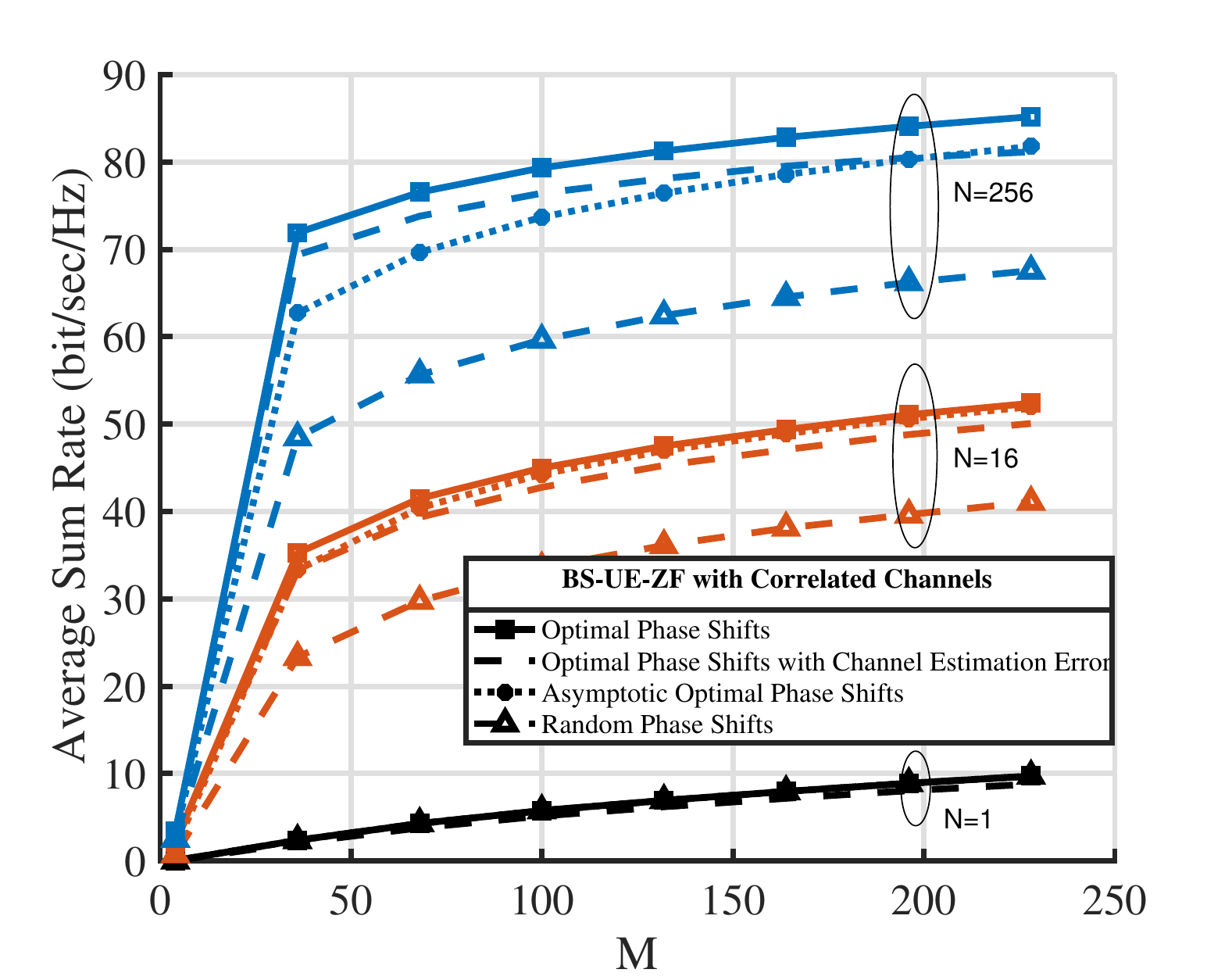}\caption{} \label{fig_BStoUser_cor}
     \end{subfigure}
 \caption{ Average sum rate exploiting the BS-UE-ZF  approach vs. the number of the BS antennas ($M$), $K=4$, $U_d=2$ (a): i.i.d.  (b): Correlated  channels.} \label{fig_BStoUser}
    \end{figure}
 
 In Fig. \ref{fig_BStoUser}, we illustrate the average sum rate versus the number of the BS antennas, for various phase shift design approaches and the number of RIS elements in the BS-UE-ZF beamforming scenario,  considering both the   i.i.d. and correlated channels.  It can be noticed  that in both cases increasing the number of  the elements in the RISs  improves the average sum rate, and  $N=1$ leads to the lowest average sum rate. 
 In Fig.\ref{fig_BStoUser_iid}, we observe that for the i.i.d.  channels, by increasing the number of the BS antennas the average sum rate of random phase shifts  tends to the average sum rate of the optimal ones (see Corollary \ref{corollary1}).  In Fig. \ref{fig_BStoUser_cor}, we observe that in the case of the correlated  channels, by increasing the number of the BS antennas, the  average sum rate of  the asymptotic phase shifts converges to the average sum rate of the optimal phase shifts. Moreover,  both of these phase shift design approaches achieve a much more average sum  rate compared to  random phase shifts. 
 
 In Fig. \ref{fig_BStoIRS}, we plot the average sum rate versus the number of the BS antennas for 
 the various phase shift design approaches in the BS-RIS-ZF beamforming scenario, and also depict the asymptotic average sum rate  obtained by the asymptotic SINRs. We observe that for both the i.i.d. and correlated channels,  the  rates of the asymptotic phase shifts and the asymptotic SINRs  accurately track  the optimal phase shifts rate.
Moreover, it can be observed that in the case of  the i.i.d.  channels when the condition $M\geqslant KN+U_d$ is satisfied, the average sum rates start to increase,  and the correlated channels do not have   a  restrictive impact on the start point of rate arising. 

Furthermore, in  Fig. \ref{fig_BStoUser} and Fig. \ref{fig_BStoIRS}, we illustrate 
 the curves corresponding to the case that  we have channel estimation error  (the channel estimation approach in \cite{channel_estimation} is used).   We observe that in both of the i.i.d. and correlated channels there is not a significant  reduction in  the sum rate of the BS-UE-ZF and BS-RIS-ZF approaches when we have some error in the channel estimation. 
      \begin{figure}[t] \centering
    \begin{subfigure}{0.32\textwidth}\includegraphics[width=1\columnwidth]{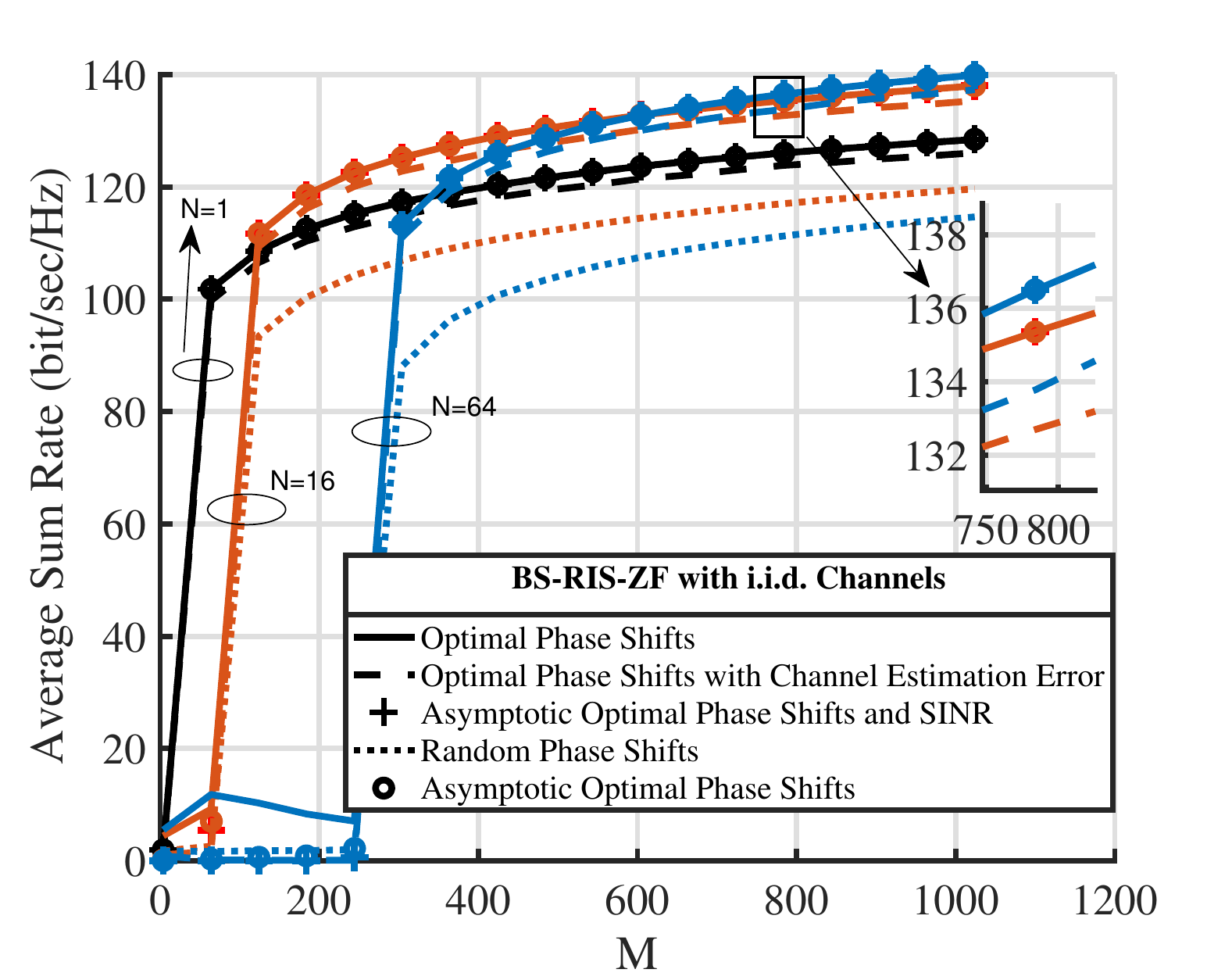}\caption{}\label{fig_BStoIRS_iid}
    \end{subfigure}
 \begin{subfigure}{0.32\textwidth}\includegraphics[width=1\columnwidth]{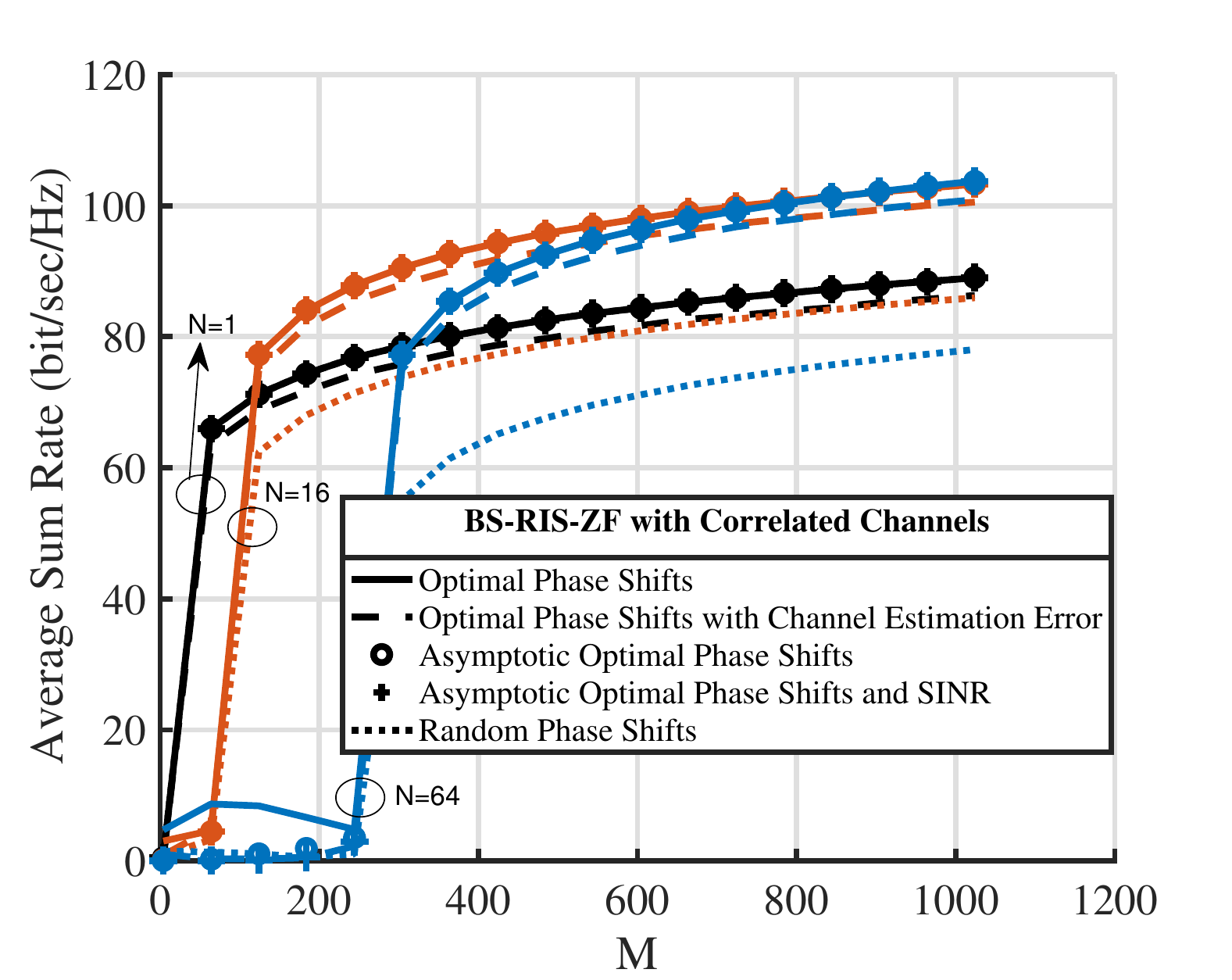}\caption{}
 \label{fig_BStoIRS_cor}
     \end{subfigure}
 \caption{ Average sum rate exploiting the BS-RIS-ZF  approach vs. the number of the BS antennas ($M$), $K=4$, $U_d=2$ (a): i.i.d.  (b): Correlated  channels. }
   \label{fig_BStoIRS}
    \end{figure}
    \vspace{-0.5cm}
 \section{Conclusion}
In this letter, we considered  a multi-RIS, multi-user massive  MIMO system and investigated the SINR maximization problem for the asymptotic scenario where the number of the BS antennas tends to infinity. We  examined two ZF beamforming approaches, i.e.,  BS-UE-ZF and BS-RIS-ZF which null the interference at the UEs and  the RISs, respectively. For each of the proposed methods, we  obtained the optimal phase shifts of the  elements of the RISs that maximize the SINR of the UEs.  Considering the  BS-UE-ZF beamforming approach, we showed that when the channels of the RIS elements are independent, random  phase shifts can achieve the maximum SINR for an asymptotic large number of  BS antennas. For the BS-UE-ZF beamforming method, the simulation results showed that  the asymptotic expressions of the RIS phase shifts  achieve the rate of the optimal phase shifts, even for a small number of the BS antennas. 
\ifCLASSOPTIONcaptionsoff
  \newpage
\fi
\vspace{-0.2cm}
\bibliographystyle{IEEEtran}
\bibliography{ref}
\end{document}